\documentclass{elsart}
\journal{J. of Zhejiang University SCIENCE}

\usepackage{amsfonts,bm}
\usepackage{graphicx}
\usepackage{amsmath,amssymb}
\usepackage{url,cite}

\setlength{\fboxsep}{1pt}

\usepackage{lineno}
\linenumbers\setlength\linenumbersep{2pt}

\graphicspath{{figures/}}

\newtheorem{proposition}{Proposition}
\newtheorem{corollary}{Corollary}
\newenvironment{proof}{\noindent\textit{Proof}: }{\hfill$\blacksquare$\vskip 0.5\baselineskip}

\begin{document}

\newlength\figwidth
\setlength\figwidth{0.48\columnwidth}

\begin{frontmatter}
\title{Cryptanalysis of an image encryption scheme based on the Hill cipher}
\author[hk-cityu]{Chengqing Li\corauthref{corr}}, \author[cn-zju]{Dan Zhang}, and \author[hk-cityu]{Guanrong Chen}
\corauth[corr]{Corresponding author: Chengqing Li
(swiftsheep@hotmail.com).}

\address[hk-cityu]{Department of Electronic Engineering, City University of Hong Kong,
Kowloon Tong, Hong Kong SAR, China}
\address[cn-zju]{College of Computer Science and Technology, Zhejiang University, Hangzhou 310027, Zhejiang, China}

\begin{abstract}
This paper studies the security of an image encryption scheme based
on the Hill cipher and reports its following problems: 1) there is a
simple necessary and sufficient condition that makes a number of
secret keys invalid; 2) it is insensitive to the change of the
secret key; 3) it is insensitive to the change of the plain-image;
4) it can be broken with only one known/chosen-plaintext; 5) it has
some other minor defects.
\end{abstract}
\begin{keyword}
cryptanalysis \sep encryption \sep Hill cipher \sep known-plaintext
attack.

CLC: TN918, TP393.08.
\end{keyword}
\end{frontmatter}

\section{Introduction}

The history of cryptography can be traced back to the secret
communication among people thousands of years ago. With the
development of human society and industrial technology, theories and
methods of cryptography have been changed and improved gradually,
and meanwhile cryptanalysis has also been developed. In 1949,
Shannon published his seminar paper ``Communication theory of
secrecy systems" \cite{Shannon:CommunicationTheory1949}, which
marked the beginning of the modern cryptology.

In the past two decades, the security of multimedia data has become
more and more important. However, it has been recognized that the
traditional text-encryption schemes cannot efficiently protect
multimedia data due to some special properties of the multimedia
data, such as strong redundancy and bulk size of the uncompressed
data. To meet this challenge, a number of special image encryption
schemes based on some nonlinear theories were proposed
\cite{Li:ChaosImageVideoEncryption:Handbook2004,Li:Dissertation2003,Lcq:MasterThesis2005}.
Yet, many of them are found to be insecure from the view point of
cryptography \cite{Li:AttackingCNN2004,Li:AttackTDCEA2005,
Li:AttackingISNN2005,Li:AttackDSEA2004,Li:AttackJEI2006,
Li:AttackingBitshiftXOR2006,Li:BasicRequire:IJBC06,
David:AttackingFilterBank2007,Li:AttackingIVC2007,Li:AttackingISWBE2006,Zhou:AnalysisHuffman2007,
LiShujun:PVEA:IEEETCASVT2007,AlvarezLi:CBM2007}.

In \cite{ISMAIL:JZJUS06}, Ismail et al. tried to encrypt images
efficiently by modifying the classical Hill cipher
\cite{Hill:HillCipher1929}. This paper studies the security of the
scheme proposed in \cite{ISMAIL:JZJUS06} and reports the following
findings: 1) there exist a number of invalid secret keys; 2) the
scheme is insensitive to the change of the secret key; 3) the scheme
is insensitive to the change of the plain-image; 4) the scheme can
be broken with only one known/chosen plain-image; 5) the scheme has
some other minor performance defects.

The rest of this paper is organized as follows. The next section
briefly introduces the encryption scheme to be studied.
Section~\ref{sec:Cryptanalysis} presents detailed cryptanalysis of
the scheme. The last section concludes the paper.

\section{The image encryption scheme to be studied}
\label{sec:scheme}

The scheme proposed in \cite{ISMAIL:JZJUS06} scans the gray scales
of a plain-image $\bm{P}$ (or one channel of a color image) of size
$M\times N$ in a raster order and divides it into $\lceil MN/m
\rceil$ vectors of size $m$: $\{\bm{P}_l\}_{l=1}^{\lceil MN/m
\rceil}$, where $\bm{P}_l=\{\bm{P}((l-1)\cdot m+1), \cdots,
\bm{P}((l-1)\cdot m+m)\}$ (the last vector is padded with some zero
bytes if $MN$ can not be divided by $m$). Then, the vectors
$\{\bm{P}_l\}_{l=1}^{\lceil MN/m \rceil}$ are encrypted in
increasing order with the following function:
\begin{equation}
\bm{C}_l=(\bm{P}_l\cdot \bm{K}_l)\bmod 256, \label{eq:encryption}
\end{equation}
where $\bm{K}_1=(\bm{K}_1[i,j])_{m\times m}$, $\bm{K}_1[i,j]\in
\mathbb{Z}_{256}$, the initial state of $\bm{K}_{l\ge 2}$ is set to
be $\bm{K}_{l-1}$, and then every row of $\bm{K}_l$ is generated
iteratively with the following function, for $i=1\sim m$:
\begin{equation}
\bm{K}_l[i,:]=(\bm{IV}\cdot\bm{K}_{l})\bmod 256,
\label{eq:Updatekey}
\end{equation}
where $\bm{IV}$ is a vector of size $1\times m$ and $\bm{IV}[i]\in
\mathbb{Z}_{256}$. Finally, the cipher-image is obtained as
$\bm{C}=\{\bm{C}_l\}_{l=1}^{\lceil MN/m \rceil}$.

The secret key of the encryption scheme includes three parts: $m$,
$\bm{K}_0$, and $\bm{IV}$.

The decryption procedure is the same as the above encryption
procedure except that Eq.~(\ref{eq:encryption}) is replaced by the
following function:
\begin{equation}
\bm{P}_l=(\bm{C}_l\cdot \bm{K}_l^{-1})\bmod 256,
\end{equation}
where  $(\bm{K}_l\cdot \bm{K}_l^{-1})\bmod 256=\bm{\mathrm{I}}$, the
identity matrix.

\section{Cryptanalysis}
\label{sec:Cryptanalysis}

\subsection{Some Defects of the Scheme}

\subsubsection{Invalid keys}

An invalid key is a key that fails to ensure the success of the
encryption scheme.

From the following Fact~\ref{fact:invertible} and
Corollary~\ref{coro:det}, one can see that one secret key in the
above-described scheme is invalid if and only if
$\gcd(\bm{K}_1,256)\neq 1$ or $\bm{IV}[i]\bmod 2=0$.

\begin{fact}
A matrix $\bm{K}$ is invertible in $\mathbb{Z}_n$ if and only if
$\gcd(\det(\bm{K}),n)=1$. \label{fact:invertible}
\end{fact}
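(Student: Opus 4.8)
The plan is to prove both directions using the classical \emph{adjugate identity}, which is valid over any commutative ring and in particular over $\mathbb{Z}_n$: for any $m\times m$ matrix $\bm{K}$,
\begin{equation}
\bm{K}\cdot\operatorname{adj}(\bm{K})=\operatorname{adj}(\bm{K})\cdot\bm{K}=\det(\bm{K})\cdot\bm{\mathrm{I}}\pmod{n},
\label{eq:adjugate}
\end{equation}
where $\operatorname{adj}(\bm{K})$ denotes the transpose of the cofactor matrix. I would also invoke the standard number-theoretic fact that an element $u\in\mathbb{Z}_n$ is a unit (i.e.\ has a multiplicative inverse modulo $n$) if and only if $\gcd(u,n)=1$; this follows at once from B\'ezout's identity, since $au+bn=1$ is solvable in integers precisely when $\gcd(u,n)=1$, and any such solution gives $au\equiv 1\pmod n$.

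For the ``if'' direction, assume $\gcd(\det(\bm{K}),n)=1$, so that $\det(\bm{K})$ is a unit in $\mathbb{Z}_n$ and there exists $d$ with $d\cdot\det(\bm{K})\equiv 1\pmod n$. Multiplying Eq.~(\ref{eq:adjugate}) through by $d$ and setting $\bm{K}^{-1}=\bigl(d\cdot\operatorname{adj}(\bm{K})\bigr)\bmod n$ yields $\bm{K}\cdot\bm{K}^{-1}\equiv\bm{\mathrm{I}}\pmod n$, so $\bm{K}$ is invertible. For the ``only if'' direction, assume $\bm{K}$ is invertible, i.e.\ there is a matrix $\bm{L}$ over $\mathbb{Z}_n$ with $(\bm{K}\cdot\bm{L})\bmod n=\bm{\mathrm{I}}$. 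Taking determinants and using the multiplicativity of $\det$ (again valid over the commutative ring $\mathbb{Z}_n$) gives $\det(\bm{K})\cdot\det(\bm{L})\equiv 1\pmod n$, whence $\det(\bm{K})$ is a unit and therefore $\gcd(\det(\bm{K}),n)=1$.

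The only genuinely delicate point, and the step I would treat most carefully, is justifying that Eq.~(\ref{eq:adjugate}) and the multiplicativity $\det(\bm{K}\bm{L})=\det(\bm{K})\det(\bm{L})$ survive over $\mathbb{Z}_n$, which has zero divisors, rather than holding only over a field. The clean argument is that both are \emph{polynomial identities} in the entries of the matrices with integer coefficients: they can be verified over $\mathbb{Q}$, where the familiar field proofs apply, and since all coefficients are integers they then hold identically over $\mathbb{Z}$; any such integer identity descends to $\mathbb{Z}_n$ under the reduction-modulo-$n$ ring homomorphism. Once this is granted, both implications follow immediately from the adjugate construction together with the unit-versus-gcd characterization, with no further computation required.
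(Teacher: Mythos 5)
Your proof is correct, but there is nothing in the paper to compare it against: the paper states this result as a bare \emph{Fact}, invoking it as classical background without any argument (only Proposition~1 and Corollary~1 receive proofs there). Your adjugate-based route is the standard one and is carried out soundly in both directions: the ``if'' direction via $\bm{K}\cdot\operatorname{adj}(\bm{K})=\det(\bm{K})\cdot\bm{\mathrm{I}}$ scaled by a modular inverse $d$ of $\det(\bm{K})$, and the ``only if'' direction via multiplicativity of the determinant applied to $\bm{K}\bm{L}\equiv\bm{\mathrm{I}}\pmod n$, together with the B\'ezout characterization of units in $\mathbb{Z}_n$. You also correctly isolate and resolve the one subtle point that a careless write-up would miss, namely that $\mathbb{Z}_n$ has zero divisors: your observation that both the adjugate identity and $\det(\bm{K}\bm{L})=\det(\bm{K})\det(\bm{L})$ are polynomial identities with integer coefficients in the matrix entries --- verifiable over $\mathbb{Q}$, hence equal as polynomials in $\mathbb{Z}[x_{ij}]$ since $\mathbb{Q}$ is infinite, and therefore preserved under the reduction homomorphism $\mathbb{Z}\to\mathbb{Z}_n$ --- is exactly the right justification. (One could alternatively note that the cofactor-expansion proof of the adjugate identity works verbatim over any commutative ring, never requiring division, which shortcuts the descent argument.) Relative to the paper, your contribution is simply to make the invalid-key criterion self-contained rather than resting on an uncited standard fact.
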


\begin{proposition}
$\det(\bm{K}_l)=\left(\prod\limits_{i=1}^m\bm{IV}[i]\right)\det(\bm{K}_{l-1})$.
\label{prop:det}
\end{proposition}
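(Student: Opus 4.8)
The plan is to recast the in-place row-generation rule of Eq.~(\ref{eq:Updatekey}) as a sequence of left matrix multiplications and then invoke the multiplicativity of the determinant. First I would observe that, since $\bm{K}_l$ is initialized to $\bm{K}_{l-1}$ and its rows are overwritten one at a time, the single update $\bm{K}_l[i,:]\leftarrow(\bm{IV}\cdot\bm{K}_l)\bmod 256$ is exactly left-multiplication of the current matrix by the $m\times m$ matrix $\bm{E}_i$ that equals the identity except that its $i$-th row is replaced by $\bm{IV}$. Indeed, $(\bm{E}_i\bm{K}_l)[k,:]=\bm{K}_l[k,:]$ for $k\neq i$, while $(\bm{E}_i\bm{K}_l)[i,:]=\sum_{t=1}^m\bm{IV}[t]\,\bm{K}_l[t,:]=\bm{IV}\cdot\bm{K}_l$, which is precisely the overwrite.

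Next, carrying out the updates for $i=1,2,\dots,m$ in order corresponds to composing these factors, so that $\bm{K}_l=(\bm{E}_m\cdots\bm{E}_2\bm{E}_1\,\bm{K}_{l-1})\bmod 256$; the in-place semantics are respected because each $\bm{E}_i$ acts on the matrix already modified by $\bm{E}_1,\dots,\bm{E}_{i-1}$. The key computation is then $\det(\bm{E}_i)$. Using multilinearity of the determinant in the $i$-th row and writing $\bm{IV}=\sum_t\bm{IV}[t]\,\bm{e}_t$, every term with $t\neq i$ has two equal rows and therefore vanishes, leaving $\det(\bm{E}_i)=\bm{IV}[i]$ (equivalently, one may expand along row $i$, or row-reduce $\bm{E}_i$ to a diagonal matrix carrying $\bm{IV}[i]$ in position $(i,i)$).

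Finally I would combine these facts. Since $\det$ is multiplicative over the commutative ring $\mathbb{Z}_{256}$ and the matrix identity above holds modulo $256$, it follows that $\det(\bm{K}_l)=\prod_{i=1}^m\det(\bm{E}_i)\cdot\det(\bm{K}_{l-1})=\left(\prod_{i=1}^m\bm{IV}[i]\right)\det(\bm{K}_{l-1})$, which is the claimed relation.

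I expect the only delicate point to be the bookkeeping in the first two steps: one must respect that the rule updates rows \emph{in place}, so that a later update sees the already-modified earlier rows, and one must confirm that this sequential in-place process is faithfully encoded by the ordered product $\bm{E}_m\cdots\bm{E}_1$ rather than by a single simultaneous substitution. Once that correspondence is pinned down, the determinant evaluation and the multiplicative property close the argument routinely.
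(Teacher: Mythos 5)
Your proof is correct, and it takes a genuinely different route from the paper's. The paper works directly on the matrix $\bm{K}_l$ after the fact: it writes out each row as $\sum_{i=1}^{i_0-1}\bm{IV}[i]\bm{K}_l[i,:]+\sum_{i=i_0}^{m}\bm{IV}[i]\bm{K}_{l-1}[i,:]$, then applies two sweeps of determinant-preserving row operations --- first subtracting $\sum_{i=1}^{i_0-1}\bm{IV}[i]\bm{K}_l[i,:]$ from row $i_0$ for $i_0=m\sim 2$ to strip off the already-updated contributions, then subtracting each row from the one above it --- arriving at a matrix whose $i$-th row is $\bm{IV}[i]\bm{K}_{l-1}[i,:]$, and finishes by multilinearity. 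You instead encode the process itself: each in-place overwrite is left multiplication by $\bm{E}_i$ (identity with row $i$ replaced by $\bm{IV}$), so $\bm{K}_l=(\bm{E}_m\cdots\bm{E}_1\bm{K}_{l-1})\bmod 256$, and the result follows from $\det(\bm{E}_i)=\bm{IV}[i]$ and multiplicativity of $\det$ over the commutative ring $\mathbb{Z}_{256}$. The delicate point you flag --- that the ordered product $\bm{E}_m\cdots\bm{E}_1$ faithfully represents sequential in-place updates because each factor acts on the already-modified matrix --- is handled correctly, and it is exactly the same bookkeeping the paper absorbs into its explicit row expansion. Your factorization is arguably the more structural argument: it makes Corollary~\ref{coro:det} immediate, since $\bm{K}_l=(\bm{E}_m\cdots\bm{E}_1)^{l-1}\bm{K}_1$ with $\det(\bm{E}_m\cdots\bm{E}_1)=\prod_{i=1}^m\bm{IV}[i]$, and it also exposes directly why the invalid-key condition involves the parity of each $\bm{IV}[i]$ (each factor $\bm{E}_i$ is invertible mod $256$ iff $\bm{IV}[i]$ is odd). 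What the paper's version buys in exchange is self-containedness at a lower level: it needs only invariance of the determinant under row operations and scaling, displays the intermediate matrices $\bm{K}'_l$ and $\bm{K}''_l$ explicitly, and never requires the reader to verify the operator identity behind the update rule.
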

\begin{proof}
According to Eq.~(\ref{eq:Updatekey}), there is a relation between
$\bm{K}_l$ and $\bm{K}_{l-1}$, as follows:
\begin{equation}
\bm{K}_l=\left(
\begin{array}{cc}
 & \sum\limits_{i=1}^m \bm{IV}[i]\bm{K}_{l-1}[i,:]\\
 & \bm{IV}[1]\bm{K}_l[1,:]+\sum\limits_{i=2}^m \bm{IV}[i] \bm{K}_{l-1}[i,:]\\
 &         \vdots    \\
 &\sum\limits_{i=1}^{m-1} \bm{IV}[i] \bm{K}_{l}[i,:]+\bm{IV}[m]
\bm{K}_{l-1}[m,:]
\end{array}
\right)\bmod 256.
\end{equation}

Subtracting $\sum\limits_{i=1}^{i_0-1} \bm{IV}[i] \bm{K}_{l}[i,:]$
from $\bm{K}_{l}[i_0,:]$ for $i_0=m\sim 2$, one gets

\begin{equation}
\bm{K}'_l=\left(
\begin{array}{cc}
 & \sum\limits_{i=1}^m \bm{IV}[i]\bm{K}_{l-1}[i,:]\\
 & \sum\limits_{i=2}^m \bm{IV}[i] \bm{K}_{l-1}[i,:]\\
 &         \vdots    \\
 & \bm{IV}[m]\bm{K}_{l-1}[m,:]
\end{array}
\right)\bmod 256.
\end{equation}

Subtracting $\bm{K}'_{l}[i_0,:]$ from $\bm{K}'_{l}[i_0-1,:]$ for
$i_0=2\sim m$, one has

\begin{equation}
\bm{K}''_l=\left(
\begin{array}{cc}
 & \bm{IV}[1]\bm{K}_{l-1}[1,:]\\
 & \bm{IV}[2] \bm{K}_{l-1}[2,:]\\
 &         \vdots    \\
 & \bm{IV}[m]\bm{K}_{l-1}[m,:]
\end{array}
\right)\bmod 256.
\end{equation}

Obviously, $\det(\bm{K}_l)=\det(\bm{K}'_l)
=\det(\bm{K}''_l)=\left(\prod\limits_{i=1}^m\bm{IV}[i]\right)
\det(\bm{K}_{l-1})$, which completes the proof of the proposition.
\end{proof}

\begin{corollary}
$\det(\bm{K}_l)=\left(\prod\limits_{i=1}^m\bm{IV}[i]\right)^{l-1}\det(\bm{K}_{1})$.
\label{coro:det}
\end{corollary}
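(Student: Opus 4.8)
The plan is to obtain the closed form by iterating the one-step recurrence of Proposition~\ref{prop:det}, which is most cleanly organized as an induction on the index $l$. Write $\pi = \prod_{i=1}^m \bm{IV}[i]$ for the common multiplicative factor, so that Proposition~\ref{prop:det} reads $\det(\bm{K}_l) = \pi\,\det(\bm{K}_{l-1})$ for every $l \ge 2$.

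For the base case $l=1$, the claimed identity reduces to $\det(\bm{K}_1) = \pi^{0}\det(\bm{K}_1) = \det(\bm{K}_1)$, which holds trivially. For the inductive step, I would assume the formula at index $l-1$, namely $\det(\bm{K}_{l-1}) = \pi^{(l-1)-1}\det(\bm{K}_1) = \pi^{l-2}\det(\bm{K}_1)$, and substitute this into the recurrence to get $\det(\bm{K}_l) = \pi\cdot\pi^{l-2}\det(\bm{K}_1) = \pi^{l-1}\det(\bm{K}_1)$, which is exactly the desired expression. Equivalently, one may simply telescope the recurrence: applying Proposition~\ref{prop:det} repeatedly peels off one factor of $\pi$ per step until the index is driven down to $1$.

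Since the entire argument is a direct consequence of the already-established Proposition~\ref{prop:det}, there is no genuine obstacle here; the only point requiring care is bookkeeping the exponent, ensuring that reducing the index from $l$ down to $1$ contributes exactly $l-1$ copies of $\pi$ rather than $l$. All determinant computations are understood modulo $256$, consistent with the convention used in the proof of the proposition.
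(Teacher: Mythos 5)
Your proof is correct and is essentially the paper's own argument: the paper simply states that the corollary ``directly follows from Proposition~\ref{prop:det}'', and your induction (or telescoping) on $l$ is exactly the routine iteration that statement leaves implicit. Nothing further is needed.
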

\begin{proof}
The result directly follows from Proposition~\ref{prop:det}.
\end{proof}

\subsubsection{Insensitivity to the change of the secret key}

Although it is claimed in \cite[Sec. 5]{ISMAIL:JZJUS06} that the
encryption scheme is very sensitive to the change of the sub-keys
$\bm{K}_1$, $\bm{IV}$, this is not true.

Let's first study the influence on $\bm{K}_{l\ge 2}$ if only one bit
of $\bm{K}_1$ is changed. Without loss of generality, assume that
the $n$-th significant bit of $\bm{K}_1(1,j_0)$ is changed from zero
to one, where $0\le n\le 7$. Let $\widetilde{\bm{K}}_l$ denote the
modified version of $\bm{K}_l$. The change
$\bm{D}_l=\widetilde{\bm{K}}_l-\bm{K}_l$ can be presented by the
following two equations:

\begin{equation}
\bm{D}_l[:,j]\equiv0, \mbox{for } j\neq j_0,
\end{equation}
\begin{equation}
\bm{D}_l[:,j_0]=\left(
\begin{array}{cc}
 & \sum\limits_{i=1}^m\bm{IV}[i]\bm{D}_{l-1}[i, j_0]\\
 & \bm{IV}[1]\bm{D}_{l}[1, j_0]+\sum\limits_{i=2}^m\bm{IV}[i]\bm{D}_{l-1}[i, j_0]\\
 &         \vdots    \\
 & \sum\limits_{i=1}^{m-1}\bm{IV}[i]\bm{D}_{l}[i, j_0]+\bm{IV}[m]\bm{D}_{l-1}[m, j_0]
\end{array}
\right)\bmod 256,
\label{eq:Difference1}
\end{equation}
where $\bm{D}_1[1,j_0]=2^n$, $\bm{D}_1[i,j_0]=0$, $i=2\sim m$.

Since $\bm{IV}[i]\bmod 2\neq 0$, $\bm{D}_l[i,j_0]\neq 0$ always
exist. From Eq.~(\ref{eq:Difference1}), one can see that
$\bm{D}_l[i,j_0]\ge 2^n$ exists, which means that only the $n_0$-th
bit of $\bm{C}_l[j_0]$ may possibly be changed, where $n_0\ge n$.
Note also that there is no influence on $\bm{C}_l$ if
$$(\bm{P}_l\bm{D}_l[:,j_0])\bmod 256=0.$$

To verify the above analysis, an experiment has been carried out
using a plain-image ``Lenna" with the secret key
\begin{equation}
m=4, \bm{IV}=(
\begin{matrix}
3 & 9 & 17 & 33
\end{matrix}),
\bm{K}_1=\left(
\begin{matrix}
11 & 2 & 3 & 7 \\
8 & 5 & 19 & 103 \\
201 & 203 & 119 & 150 \\
7 & 9 & 21 & 35 \\
\end{matrix}
\right). \label{eq:secretkey}
\end{equation}
Only the $5$-th significant bit of $\bm{K}_1[1,2]$ is changed,
namely $\widetilde{\bm{K}}_1[1,2]=(\bm{K}_1[1,2]+2^{5})\bmod 256$.
Let $\widetilde{\bm{C}}$ denote the cipher-image corresponding to
$\widetilde{\bm{K}}_1$. The bit-planes of difference
$|\widetilde{\bm{C}}-\bm{C}|$ are shown in
Fig.~\ref{figure:InsensitivtyK1}, which demonstrates the very weak
sensitivity of the encryption scheme with respect to $\bm{K}_1$.

\begin{figure}[!htb]
\centering
\begin{minipage}[t]{0.5\figwidth}
\includegraphics[width=0.5\figwidth]{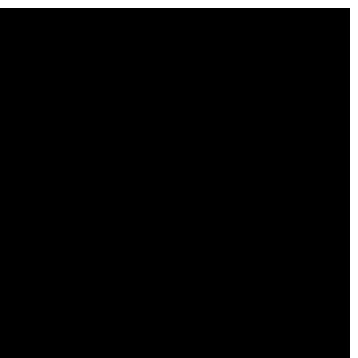}
a) $0\sim 4$-th
\end{minipage}
\begin{minipage}[t]{0.5\figwidth}
\includegraphics[width=0.5\figwidth]{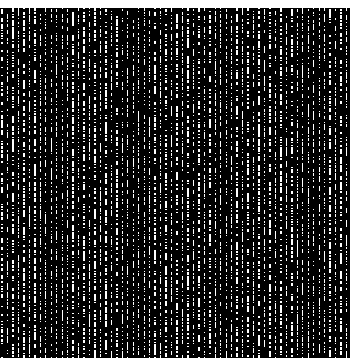}
b) $5$-th
\end{minipage}
\begin{minipage}[t]{0.5\figwidth}
\includegraphics[width=0.5\figwidth]{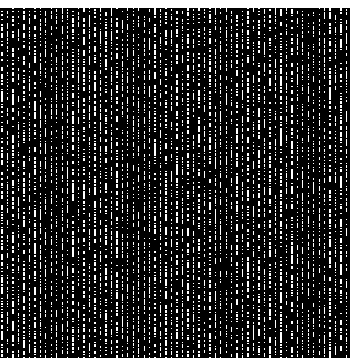}
c) $6$-th
\end{minipage}
\begin{minipage}[t]{0.5\figwidth}
\includegraphics[width=0.5\figwidth]{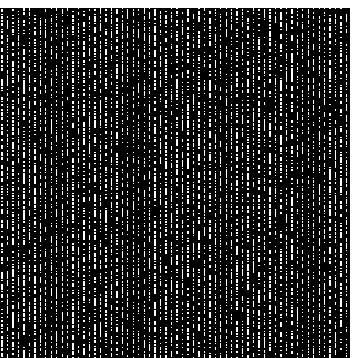}
d) $7$-th
\end{minipage}
\caption{The bit-planes of $|\widetilde{\bm{C}}-\bm{C}|$ when one
bit of $\bm{K}_1$ is changed.} \label{figure:InsensitivtyK1}
\end{figure}

Now, consider the influence on $\bm{K}_{l\ge 2}$ if only one bit of
$\bm{IV}$ is changed. Without loss of generality, assume the $n$-th
significant bit of $\bm{IV}[1]$ is changed from zero to one.
Similarly, let $\bm{D}_l$ denote the change of $\bm{K}_l$. Due to
the extremely complex formulation of $\bm{D}_{l\ge 3}$, only
$\bm{D}_2$ is shown here.

\begin{equation}
\bm{D}_2[:,j]=\left(
\begin{array}{cc}
 & \bm{K}_1[1,j]2^n \\
 & \bm{D}_2[1,j](\bm{IV}[1]+2^{n})+\bm{K}_2[1,j]2^n\\
 & \bm{D}_2[2,j]+\bm{IV}[2]\bm{D}_2[2,j]\\
 &         \vdots    \\
 & \bm{D}_2[2,j]+\sum\limits_{i=2}^{m-1}\bm{IV}[i]\bm{D}_2[i,j]
\end{array}
\right)\bmod 256,\label{eq:Difference2}
\end{equation}
where $j=1\sim m$.

To see the influence of the change of $\bm{IV}$, an experiment has
been carried out using plain-image ``Lenna", with the same secret
key shown in Eq.~(\ref{eq:secretkey}) above. Only the $5$-th
significant bit of $\bm{IV}[1]$ is changed, namely
$\widetilde{\bm{IV}}[1]=(\bm{IV}[1]+2^{5})\bmod 256$. The bit-planes
of difference between cipher-images corresponding to $\bm{IV}$ and
$\widetilde{\bm{IV}}$, respectively, are shown in
Fig.~\ref{figure:InsensitivtyIV}.

Comparing Fig.~\ref{figure:InsensitivtyK1} and
Fig.~\ref{figure:InsensitivtyIV}, one can see that the sensitivty
with respect to $\bm{IV}$ is much stronger than the one with respect
to $\bm{K}_1$, which agrees with the above theoretical analysis. But
one bit change of a sub-key of a secure cipher should cause every
bit of the ciphertext changed with a probability of $\frac{1}{2}$.
Obviously, the sensitivity of the encryption scheme with respect to
sub-keys $\bm{K}_1$, $\bm{IV}$ is very far from this requirement.

\begin{figure}[!htb]
\centering
\begin{minipage}[t]{0.5\figwidth}
\includegraphics[width=0.5\figwidth]{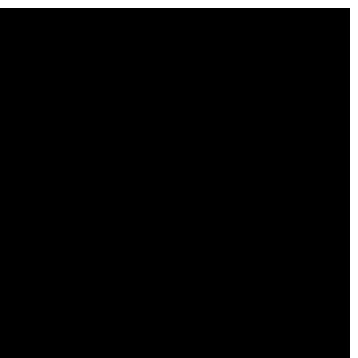}
a) $0\sim 4$-th
\end{minipage}
\begin{minipage}[t]{0.5\figwidth}
\includegraphics[width=0.5\figwidth]{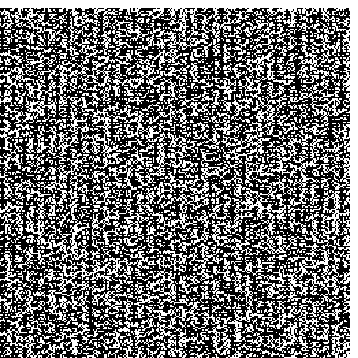}
b) $5$-th
\end{minipage}
\begin{minipage}[t]{0.5\figwidth}
\includegraphics[width=0.5\figwidth]{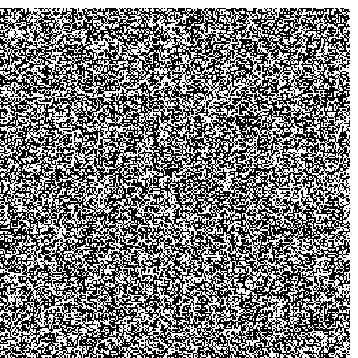}
c) $6$-th
\end{minipage}
\begin{minipage}[t]{0.5\figwidth}
\includegraphics[width=0.5\figwidth]{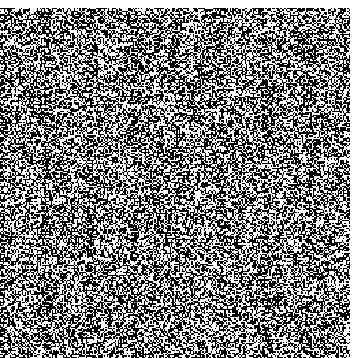}
d) $7$-th
\end{minipage}
\caption{The bit-planes of $|\widetilde{\bm{C}}-\bm{C}|$ when one
bit of $\bm{IV}$ is changed.} \label{figure:InsensitivtyIV}
\end{figure}

\subsubsection{Insensitivity to the change of the plain-image}

This property is especially important for image encryption since an
image and its watermarked version may be encrypted simultaneously.

Since the role of $\bm{P}_l$ in Eq.~(\ref{eq:encryption}) is exactly
the same as that of $\bm{IV}$ in Eq.~(\ref{eq:Updatekey}), the
analysis about its insensitivity to the change of the plain-image
can be carried out just like the case about the sub-key $\bm{IV}$
discussed above.

\subsubsection{Some other problems}

The encryption scheme has the following additional problems:

\begin{enumerate}
\item \textit{cannot encrypt plain-image of a fixed value zero};

\item \textit{efficiency of implementation is low}:
From\cite[Thorem 2.3.3]{Jeffrey:Cryptologia05}, one can see that the
number of invertible matrices of size $m\times m$ in
$\mathbb{Z}_{256}$ is
\begin{equation}
  |GL(m,\mathbb{Z}_{256})|=2^{7m^2}\prod_{k=0}^{m-1}(2^m-2^k).
\end{equation}
Thus, the probability that a matrix of size $m\times m$ in
$\mathbb{Z}_{256}$ is invertible is
\begin{equation}
p_m=\frac{2^{7m^2}\prod_{k=0}^{m-1}(2^m-2^k)}{2^{8m^2}}=\prod_{k=1}^{m}(1-2^{-k})\approx\frac{1}{3}.
\end{equation}
So, it needs $O(3m^2)$ and $O(m^2\cdot MN)$ times of computations,
respectively, for checking the reversibility of $\bm{K}_1$ and for
calculating $\{\bm{K}^{-1}_l\}_{l=1}^{\lceil MN/m\rceil}$. Note that
these computations have no direct contributions to protecting the
plain-image.

\item \textit{the scope of sub-key $m$ is limited}:
As discussed above, the larger the value $m$ the higher the
computational cost.

\item \textit{the confusion capability is weak}:
This problem is caused by the linearity of the main encryption
function. To demonstrate this defect, the encryption result of one
special plain-image is shown in Fig.~\ref{figure:SpecialImage},
where Figure~\ref{figure:SpecialImage}b) also effectively disproves
the conclusion about the quality of encryption results given in
\cite[Sec. 4]{ISMAIL:JZJUS06}.
\end{enumerate}

\begin{figure}[!htb]
\centering
\begin{minipage}[t]{0.8\figwidth}
\centering
\includegraphics[width=0.8\figwidth]{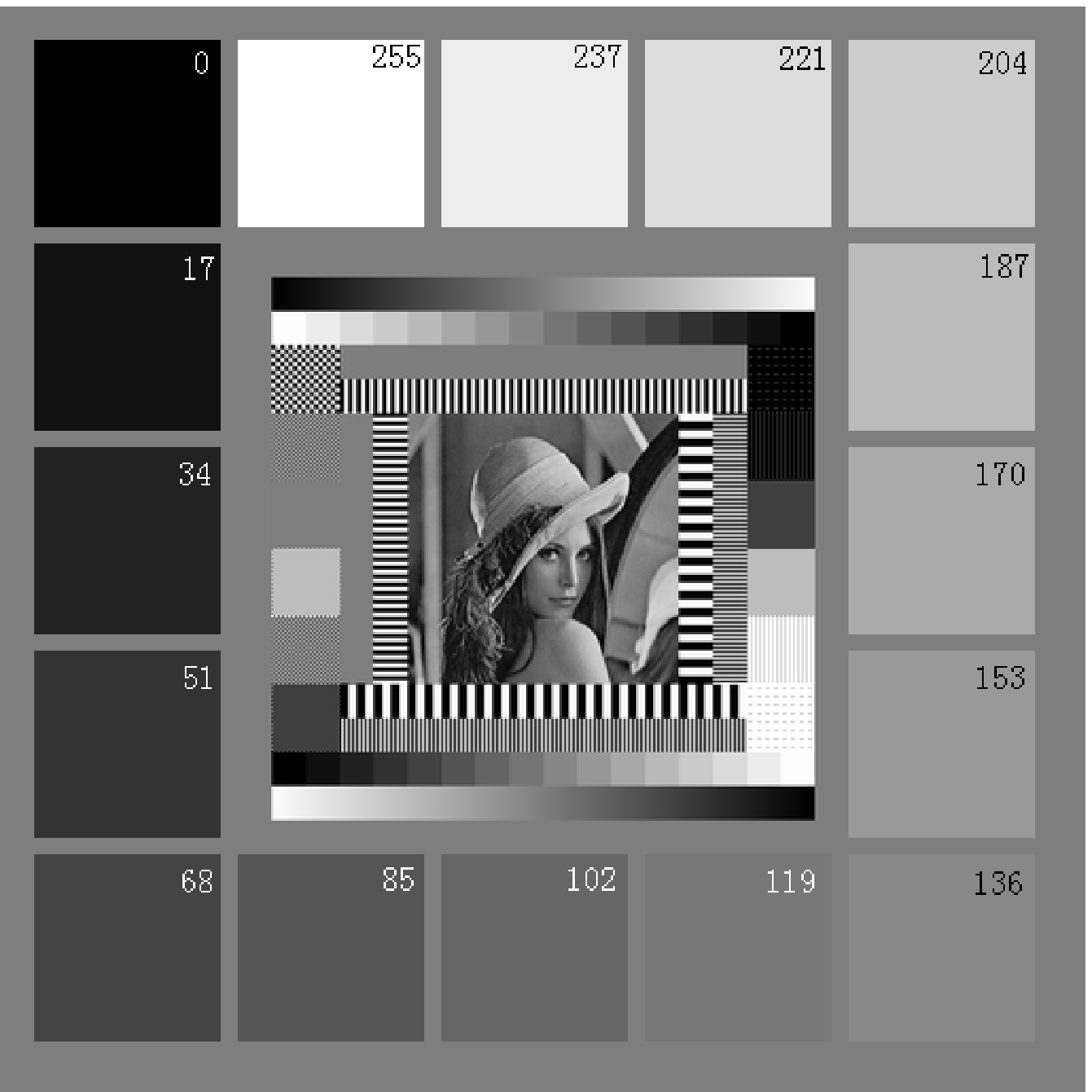}
a) plain-image
\end{minipage}
\begin{minipage}[t]{0.8\figwidth}
\centering
\includegraphics[width=0.8\figwidth]{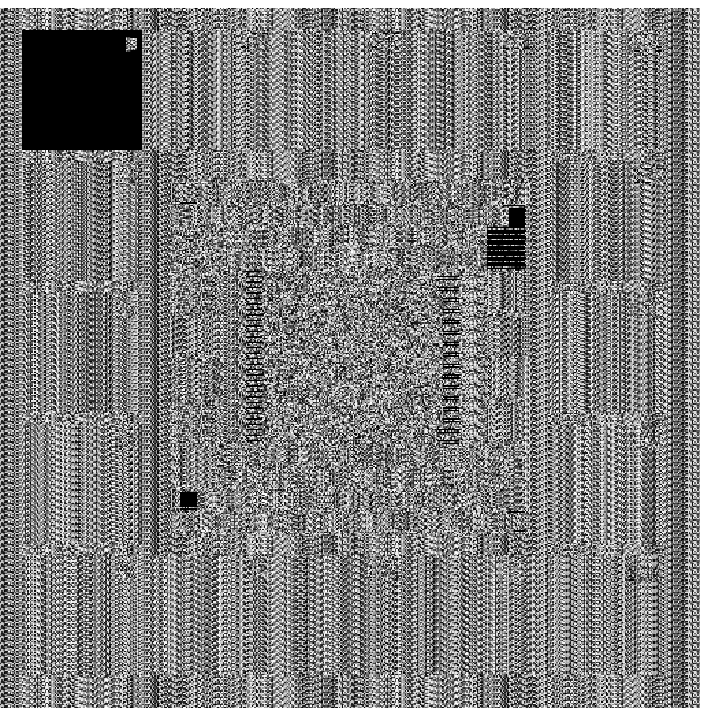}
b) cipher-image
\end{minipage}
\caption{A special test image, ``Test\_pattern".}
\label{figure:SpecialImage}
\end{figure}

\subsection{Known/Chosen-Plaintext Attack}

The known/chosen-plaintext attack works by reconstructing the secret
key or its equivalent based on some known/chosen plaintexts and
their corresponding ciphertexts.

For this encryption scheme, the equivalent key
$\{\bm{K}_l\}_{l=1}^{\lceil MN/m \rceil}$ can be reconstructed from
$m$ plain-images $\bm{P}^{(1)}\sim \bm{P}^{(m)}$ and their
corresponding cipher-images $\bm{C}^{(1)}\sim \bm{C}^{(m)}$ by using
\begin{equation} \bm{K}_l= \left(\bm{P}^{(B)}_l\cdot\left(
\begin{array}{cc}
 & \bm{C}^{(1)}_l\\
 & \bm{C}^{(2)}_l\\
 &         \vdots    \\
 & \bm{C}^{(m)}_l
\end{array}
\right)\right)\bmod 256,\label{eq:knownplainimage}
\end{equation}
where
\begin{equation}
\bm{P}^{(B)}_l= \left(
\begin{array}{cc}
 & \bm{P}^{(1)}_l\\
 & \bm{P}^{(2)}_l\\
 &                 \vdots    \\
 & \bm{P}^{(m)}_l
\end{array}
\right)^{-1}.
\end{equation}
The reversibility of $\bm{P}^{(B)}_l$ can be ensured by utilizing
more than $m$ plain-images or by choosing $m$ special plain-images.
Note that the above known/chosen-plaintext attack can be carried out
with only one know/chosen plain-image due to the very short period
of sequence $\{\bm{K}_l[:,j]\}_{l=1}^{\lceil MN/m\rceil}$ for
$j=1\sim m$. To study the period of this sequence, 10,000 tests have
been done for a given value of $\bm{IV}$ of size $1\times 3$, where
$\bm{K}_1$ is selected randomly. The numbers of tests where the
corresponding sequence $\{\bm{K}_l(:,1)\}_{l=1}^{\lceil MN/m
\rceil}$ has period $p$, $N_p$, with some values of $\bm{IV}$, is
shown in Table~\ref{table:nonuniformityK}, which shows that the
period of $\{\bm{K}_l[:,j]\}_{l=1}^{\lceil MN/m\rceil}$ is indeed
very short.

\begin{table}[htbp]
\center\caption{Values of $N_p$ with some values of $\bm{IV}$,
$p=2^s$, $s=3\sim 9$.}
\begin{tabular}{*{7}{c|}c}
\hline $\bm{IV}$       &  $N_8$ & $N_{16}$  & $N_{32}$ & $N_{64}$ & $N_{128}$ & $N_{256}$ & $N_{512}$\\
\hline (91, 63, 45)  &  0     &  0        & 0        & 0        &  0        & 1463      & 8537     \\
\hline (113, 25, 219)&  14    &  34       & 127      & 561      & 3561      & 5703      & 0        \\
\hline (253, 115, 17)&  6     &  20       & 72       & 284      & 1081      & 8537      & 0        \\
\hline (1, 3, 5)     &  0     &  0        & 98       & 284      & 1081      & 8537      & 0        \\
\hline (5, 121, 247) &  7     &  36       & 132      & 561      &  3561     & 5703      & 0        \\
\hline
\end{tabular}
\label{table:nonuniformityK}
\end{table}

\section{Conclusion}

In this paper, the security and performance of an image encryption
scheme based on the Hill cipher have been analyzed in detail. It has
been found that the scheme can be broken with only one known/chosen
plain-image. There is a simple necessary and sufficient condition
that makes a number of secret keys invalid. In addition, the scheme
is insensitive to the change of the secret key/plain-image. Some
other performance defects have also been found. In conclusion, the
encryption scheme under study actually has much weaker security than
the original Hill cipher, therefore is not recommended for
applications.

\section{Acknowledgement}

This research was supported by the City University of Hong Kong
under the SRG project 7002134.

\bibliographystyle{elsart-num}
\bibliography{hill}

\begin{thebibliography}{10}
\expandafter\ifx\csname url\endcsname\relax
  \def\url#1{\texttt{#1}}\fi
\expandafter\ifx\csname urlprefix\endcsname\relax\def\urlprefix{URL }\fi

\bibitem{Shannon:CommunicationTheory1949}
C.~E. Shannon, Communication theory of secrecy systems, Bell System Technical
  Journal 28~(4) (1949) 656--715.

\bibitem{Li:ChaosImageVideoEncryption:Handbook2004}
S.~Li, G.~Chen, X.~Zheng, Chaos-based encryption for digital images and videos,
  in: B.~Furht, D.~Kirovski (Eds.), Multimedia Security Handbook, CRC Press,
  LLC, 2004, Ch.~4, pp. 133--167, preprint available online at
  \url{http://www.hooklee.com/pub.html}.

\bibitem{Li:Dissertation2003}
S.~Li, Analyses and new designs of digital chaotic ciphers, Ph.D. thesis,
  School of Electronic and Information Engineering, Xi'an Jiaotong University,
  Xi'an, China, available online at \url{http://www.hooklee.com/pub.html}
  (2003).

\bibitem{Lcq:MasterThesis2005}
C.~Li, Cryptanalyses of some multimedia encryption schemes, Master's thesis,
  Department of Mathematics, Zhejiang University, Hangzhou, China, available
  online at \url{http://eprint.iacr.org/2006/340} (May 2005).

\bibitem{Li:AttackingCNN2004}
C.~Li, S.~Li, D.~Zhang, G.~Chen, Cryptanalysis of a chaotic neural network
  based multimedia encryption scheme, \textit{Lecture Notes in Computer
  Science} 3333 (2004) 418--425.

\bibitem{Li:AttackTDCEA2005}
C.~Li, S.~Li, G.~Chen, G.~Chen, L.~Hu, Cryptanalysis of a new signal security
  system for multimedia data transmission, EURASIP Journal on Applied Signal
  Processing 2005~(8) (2005) 1277--1288.

\bibitem{Li:AttackingISNN2005}
C.~Li, S.~Li, D.~Zhang, G.~Chen, Chosen-plaintext cryptanalysis of a
  clipped-neural-network-based chaotic cipher, \textit{Lecture Notes in
  Computer Science} 3497 (2005) 630--636.

\bibitem{Li:AttackDSEA2004}
C.~Li, S.~Li, D.-C. Lou, On the security of the {Yen-Guo's} domino signal
  encryption algorithm ({DSEA}), Elsevier Journal of Systems and Software
  79~(2) (2006) 253--258.

\bibitem{Li:AttackJEI2006}
S.~Li, C.~Li, K.-T. Lo, G.~Chen, Cryptanalysis of an image encryption scheme,
  Journal of Electronic Imaging 15~(4) (2006) article number 043012.

\bibitem{Li:AttackingBitshiftXOR2006}
C.~Li, S.~Li, G.~\'{A}lvarez, G.~Chen, K.-T. Lo, Cryptanalysis of two chaotic
  encryption schemes based on circular bit shift and xor operations, Physics
  Letters A 369~(1-2) (2007) 23--30.

\bibitem{Li:BasicRequire:IJBC06}
G.~Alvarez, S.~Li, Some basic cryptographic requirements for chaos-based
  cryptosystems, International Journal of Bifurcation and Chaos 16~(8) (2006)
  2129--2151.

\bibitem{David:AttackingFilterBank2007}
D.~Arroyo, C.~Li, S.~Li, G.~Alvarez, Cryptanalysis of a computer cryptography
  scheme based on a filter bank, available online at
  \url{http://arxiv.org/abs/0710.5471} (2007).

\bibitem{Li:AttackingIVC2007}
C.~Li, S.~Li, M.~Asim, J.~Nunez, G.~\'{A}lvarez, G.~Chen, On the security
  defects of an image encryption scheme, Cryptology ePrint Archive: Report
  2007/397, available online at \url{http://eprint.iacr.org/2007/397} (2007).

\bibitem{Li:AttackingISWBE2006}
S.~Li, C.~Li, K.-T. Lo, G.~Chen, Cryptanalysis of an image scrambling scheme
  without bandwidth expansion, accepted by \textit{IEEE Transactions on
  Circuits and Systems for Video Technology}, available online at
  \url{http://eprint.iacr.org/2006/215} (2007).

\bibitem{Zhou:AnalysisHuffman2007}
J.~Zhou, Z.~Liang, Y.~Chen, A.~O. C., Security analysis of multimedia
  encryption schemes based on multiple huffman table, IEEE Signal Processing
  Letters 14~(3) (2007) 201--204.

\bibitem{LiShujun:PVEA:IEEETCASVT2007}
S.~Li, G.~Chen, A.~Cheung, B.~Bhargava, K.-T. Lo, On the design of perceptual
  {MPEG}-video encryption algorithms, IEEE Transactions on Circuits and Systems
  for Video Technology 17~(2) (2007) 214--223.

\bibitem{AlvarezLi:CBM2007}
G.~Alvarez, S.~Li, L.~Hernandez, Analysis of security problems in a medical
  image encryption system, Computers in Biology and Medicine 37~(3) (2007)
  424--427.

\bibitem{ISMAIL:JZJUS06}
I.~A. Ismail, M.~Amin, H.~Diab, How to repair the hill cipher, Journal of
  Zhejiang University SCIENCE A 7~(12) (2006) 2022--2030.

\bibitem{Hill:HillCipher1929}
L.~S. Hill, Cryptography in an algebraic alphabet, The American Mathematical
  Monthly 36 (1929) 306--312.

\bibitem{Jeffrey:Cryptologia05}
J.~Overbey, W.~Traves, J.~Wojdylo, On the keyspace of the hill cipher,
  Cryptologia 29~(1) (2005) 59--72.

\end{thebibliography}
\end{document}